\newtheorem{theorem}{Theorem}
\newtheorem{definition}{Definition}
\newtheorem{proposition}{Proposition}
\newtheorem{lemma}{Lemma}
\begin{document}
\title{Efficient Multiplex for Band-Limited Channels: Galois-Field Division Multiple Access }

\author{H.~M. de Oliveira, R.~M. Campello de Souza and A.~N. Kauffman \\
Authors are with the Federal University of Pernambuco, \\
Recife, PE, Brazil, Email: hmo@de.ufpe.br, ricardo@ufpe.br}
\maketitle

\begin{abstract}
A new Efficient-bandwidth code-division-multiple-access (CDMA) for band-limited channels is introduced  which is based on finite field transforms. A multilevel code division multiplex exploits orthogonality properties of nonbinary sequences defined over a complex finite field. Galois-Fourier transforms contain some redundancy and just cyclotomic coefficients are needed to be transmitted yielding compact spectrum requirements. The primary advantage of such schemes regarding classical multiplex is their better spectral efficiency. This paper estimates the \textit{bandwidth compactness factor} relatively to Time Division Multiple Access TDMA showing that it strongly depends on the alphabet extension. These multiplex schemes termed Galois Division Multiplex (GDM) are based on transforms for which there exists fast algorithms. They are also convenient from the implementation viewpoint since they can be implemented by a Digital Signal Processor.
\end{abstract}

\begin{IEEEkeywords}
Digital multiplex, Code-division multiple access, Hartley-Galois transform, Finite field transforms, Spread sequence design.
\end{IEEEkeywords}

\section{Introduction}
\label{sec:introduction}
The main title of this paper is, apart from the term multiplex, literally identical to a Forney, Gallager and co-workers paper issued more than one decade ago [FOR et al. 84], which analyzed the benefits of coded-modulation techniques. The large success achieved by Ungerboeck's coded-modulation came from the way of introducing redundancy in the encoder [UNG 82]. In classical channel coding, redundant signals are appended to information symbols in a way somewhat analogous to time division multiplex TDM (envelope interleaving). It was believed that introducing error-control ability would increase bandwidth. An efficient way of introducing such an ability without sacrificing rate nor requiring more bandwidth consists in adding redundancy by  an alphabet expansion. This technique is particularly suitable for channel in the narrow-band region.  A similar reasoning occurs in the multiplex  framework  where  many people nowadays believe that mux must increase bandwidth requirements.
\\
\\
One of the most powerful tools in Communications is the Fourier transform, specially its discrete version DFT. On the other side, applications of finite field have a renewed explosion of interest in the last decades. A discrete Fourier transform for finite fields (FFFT) was introduced by Pollard [POL 71]. It has successfully applied to perform discrete convolution and as a tool of image processing [REE et al. 77, REE-TRU 79] among many other applications. Besides classical FFFT we are concerned with a new finite field discrete version [CAM et al. 98] of the integral transform introduced by R.V.L. Hartley [HAR 42, BRI 92]. Alike classical Galois-Fourier transforms [BLA 79], Finite Field Hartley transform GHT [CAM et al. 98] defined on a Gaussian integer set GI($p^m$) contains some redundancy and only the cyclotomic coset leaders of transform coefficients are needed to be transmitted. This yields a new \textbf{Efficient-bandwidth Code Division multiplex for band-limited channels}. These mux may present lower bandwidth requirements than TDM by using an argument somewhat analogous to coded-modulation. Tradeoffs between extension of the alphabet and bandwidth are exploited in the sequel.
\\
\\
In the present work, the coded-modulation idea is adapted to multiplex: Information streaming from users are not combined by interleaving (like TDM) but rather by signal alphabet expansion. The mux of users' sources over a Galois Field GF($p$) deals with an expanded signal set having symbols from an extension field GF($p^m$), $m>1$. As a consequence, the multiplex of $N$ band limited channels of identical maximal frequency $B$ leads to  \textit{bandwidth requirements less than $N.B$}, in contrast with TDMed or FDMed signals.
\\
\\
The design of such an \textit{efficient bandwidth mux} is based upon the finite field structure, specifically, it consists of applying finite-field Fourier transforms (FFFT). This paper shows that the 'bandwidth compactness factor' relatively to TDM depends on the length $m$, the alphabet extension. 
\\
\\
Another point to mention is that the superiority of digital mux regarding analogic mux is essentially due to the low complexity of TDM. The majority of today multiplex systems follows plesiocrhonous (PDH) or synchronous (SDH) Hierarchy. Besides presenting higher spectral efficiency (bits/s/Hz) than classical mux, the new CDM schemes here introduced are based on fast transforms so they also seem to be attractive from the implementation viewpoint. Although most mux systems today be intended to optical fiber which are not yet bandlimited channels, multiplex has also been adopted on satellite channels. Probably applications of such multiplex will be on transponders and specially on cellular communications. Cable Television (CATV) can also have benefit by using such a technique.

\section{A New Mux Scheme: Galois-Division-Multiplex}
\label{sec:mux}

Digital Multiplex normally alludes Time Division Multiplex (TDM). However, it also can be achieved by  Coding Division Multiplex (CDM). The CDM  has recently been focus of interest, specially after the IS-154 standardization of the CDMA system for cellular telephone. In this section we introduce a new class of mux schemes based upon finite-field transforms which can be implemented by  fast transform algorithms. \textit{Classical multiplex increases simultaneously the transmission rate and the bandwidth by the same factor}, keeping thus the spectral efficiency unchanged. In order to achieve (slight) better spectral efficiencies, classical CDMA uses waveforms presenting a non zero but residual correlation. 
\\
\\
Given a signal $v$ over a finite field GF($p$) we deal with the Galois domain considering the spectrum $V$ over an extension field GF($p^m$) which corresponds to the Finite Field Transform (Galois Transform) [BLA 79, CAM et al. 98].
\\
\\
As an alternative and attractive implementation, the multiplex is carried out by a Finite Field Transform (FFFT/FFHT) and the DEMUX corresponds exactly to a Inverse Finite-Field Transform  of length $N | p^m-1$.

\begin{figure}
\centering
{\includegraphics[width=0.8\columnwidth]{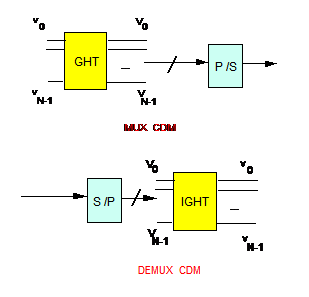}}
\caption{Implementation of Galois-Based mux or multiple-access.}
\label{fig:implementation}
\end{figure}
Each symbol on the ground field GF($p$) have duration $T$ seconds. An $N$-user mux can be designed on the extension field GF($p^m$) where $N | p^m-1$. For the sake of simplicity, we begin with $m$=1 and consider a ($p-1$)-channel mux as follows. Typically, we can consider GF(3) corresponding to Alternate Mark Inversion AMI signaling.

\begin{definition}  A Galois modulator carries a pairwise multiplication between a signal $(v_0,v_1,...,v_{N-1}),~v_i \in GF(p)$ and a carrier $(c_0,c_1,...,c_{N-1}),~with~c_i \in GI(p)$.
\end{definition}

\begin{figure}
Representation.
\centering{\includegraphics[width=0.45\columnwidth]{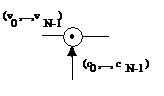}}
\end{figure}

A ($p-1$)-CD consider digital carrier sequences per channel as versions of cas function  over the Galois (complex) field GI($p$). The cas (cos and sin) function is defined in terms of finite field trigonometric functions [CAM et al. 98], $cas_i k:=cos_i k+ sin_i k$.\\
\\
Carrier  0: \\   
$  ~ ~ ~ ~ \{ cas_0 0 ~cas_0 1 ~cas_0 2~   ... cas_0 (N-1) \}$ \\
Carrier 1:\\
$ ~ ~ ~ ~ \{ cas_1 0 ~cas_1 1 ~cas_1 2~   ... cas_1 (N-1) \}$ \\
...\\
Carrier $j$:\\
$~ ~ ~ ~ \{ cas_j 0 ~cas_j 1 ~cas_j 2~  ... cas_j (N-1) \}$ \\
...\\
Carrier $N-1=p-1$:\\
$~ ~ ~ ~ \{ cas_{N-1} 0 ~cas_{N-1} 1 ~cas_{N-1} 2~  ... cas_{N-1} (N-1) \}$.\\
\\
The cyclic digital carrier has the same duration $T$ of a input modulation symbol, so that it carries $N$ slots per data symbol. The interval of each cas-symbol is $T/N$ and therefore the bandwidth expansion factor by multiplexing $N$ channels may be roughly $N$, the same result as FDM and TDM/PAM. 
\\
A first scheme of the multiplex is showed: The output corresponds exactly to the Galois-Hartley Transform of the "users"-vector $(v_0,v_1,...,v_{N-1})$. 
\\
Therefore, it contains all the information of the channels. Each coefficient $V_k$ of the spectrum has duration $T/N$.
\begin{figure}
\centering{\includegraphics[width=0.9\columnwidth]{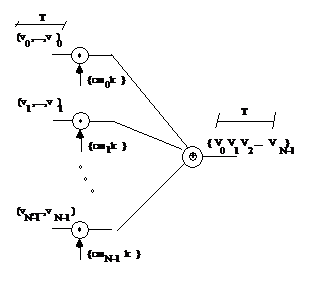}}
\caption{Galois-Field MUX: Spreading sequences.}
\label{fig:spread}
\end{figure}

These carriers can also be viewed as spreading waveforms. An $N$-user mux has $N$ 'spreading' sequences, one per channel. The requirements to achieve Welch's lower bound according to Massey and Mittelholzer [MAS-MIT 91] are hold by $ \left \{ cas_i k \right \}_{i=1}^{p-1}$ sequences. The matrix $\left [ \left \{ cas_i k \right \} \right ]$ presents both orthogonal rows and columns and have the same 'energy'.
\\
\\
A naive example is present in order to illustrate such an approach. A 4-channel mux over GF(5) can be easily implemented $i=0,1,...,p-2=3$. It is straightforward to see that such signal are not FDMed nor TDMed.

\begin{table}[h]
\centering
\caption{CAS FUNCTION ON GI(5), $\alpha=2$ element of order 4.} \label{tab1}
\begin{tabular}{c c c c}
\hline
$cas_0 0=1+j0$  & $cas_0 1=1$  & $cas_0 2=1$ & $cas_0 3=1$\\
$cas_1 0=1+j0$  & $cas_1 1=3j$  & $cas_1 2=4$ & $cas_1 3=2j$\\
$cas_2 0=1+j0$  & $cas_2 1=4$  & $cas_2 2=1$ & $cas_2 3=4$\\
$cas_3 0=1+j0$  & $cas_3 1=2j$  & $cas_3 2=4$ & $cas_3 3=3j$\\
\hline
\end{tabular}
\end{table}

\begin{figure}
\centering{\includegraphics[width=0.9\columnwidth]{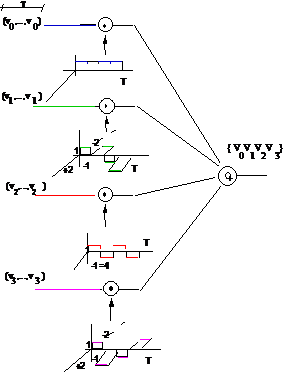}}
\caption{Interpreting Hartley-Galois Transform over GF(5) as spreading waveforms.}
\label{fig:interpret}
\end{figure}

The digital carriers are defined on a Galois field GI($p$) and consider the element $\sqrt{-1}$ which may or not belongs to GF($p$) although the original definition [CAM et al 98] consider -1 as a quadratic non residue in GF($p$). Two distinct cases are to be considered: $p=4k+1$ or $p=4k+3$, $k$ integer. If $p \equiv 1$ mod 4, then -1 is a quadratic residue. For instance, considering $j \in GF(5)$ then $2^2 \equiv -1~mod~5$ so $j= \sqrt{-1} \equiv 2~mod~5$. Two-dimensional digital $\{ cas_i k \}_{k=0}^{p-1}$ carriers degenerated to one-dimensional carriers.
\\
\\
Considering the above example, carriers are reduced to Walsh carrier!
\\
\\
$\{cas_0k\}= \{1,  1, 1, 1\} = \{1,1, 1, 1\}$\\
$\{cas_1k\}= \{1,  1, 4, 4\} = \{1,1,-1,-1\}$\\
$\{cas_2k\}= \{1,  4, 1, 4\} = \{1,-1,1, -1\}$\\
$\{cas_3k\}= \{1,  4, 4, 1\} = \{1,-1,-1, 1\}$\\
\\
\begin{equation}
\begin{bmatrix}
 1&  1&  1& 1\\ 
 1&  1&  -1& -1\\ 
 1&  -1&  1& -1\\ 
 1&  -1&  -1& -1
\end{bmatrix}
\Leftrightarrow 
\begin{bmatrix}
 1&  1&  1& 1\\ 
 1&  -1&  1& -1\\ 
 1&  1&  -1& -1\\ 
 1&  -1&  -1& -1
\end{bmatrix}
=
[WAL(k,i)].
\end{equation}

In the absence of noise, there is no cross-talk from any user to any other one which corresponds to orthogonal carrier case.\\

If channels number 1,2,3, and 4 are transmitting $\{4, 0, 1, 2\}$,  respectively, the mux output will be $(2, 3+4j, 3, 3+j)$, which corresponds to\\
\{4,0,1,2\} $\otimes$ \{1,1,1,1\}   $\equiv$ 2  mod 5\\
\{4,0,1,2\} $\otimes$ \{1,j3,4,j2\} $\equiv$ 3+4j  mod 5\\
\{4,0,1,2\} $\otimes$ \{1,4,1,4\}   $\equiv$ 3  mod 5\\
\{4,0,1,2\} $\otimes$ \{1,j2,4,j3\} $\equiv$ 3+j  mod 5\\
\\
There is no gain when the transform is taken without alphabet extension. However, we have a nice interpretation of CDM based on finite field transforms.

\section{ A New CDM scheme based on Hartley-Galois Transforms}
\label{sec:CDM}

So far we have considered essentially GHT on the ground field GF($p$), i.e., Finite Field Transforms from G($p$) to G($p$). Extension fields can be used and results are much more interesting GDMA- Galois-Field Division Multiple Access schemes. The advantage of the new scheme termed GDMA on FDMA / TDMA is regarding their higher spectral efficiency. 
\\
\\
The new multiplex is carried out over the \textit{Galois domain} instead the \textit{Frequency or time domain}. Figure 4 exhibit a block diagram of transform-based multiplexes. First, the Galois spectrum of $N$-user GF($p$) signals is evaluated. The spectral compression is achieved by eliminating the redundancy: only the leader of cyclotomic cosets are transmitted. The demultiplex is carried out (after signal regeneration) first recovering the complete spectrum by "filling" missing components from the received coset leaders. Then, the inverse finite field transform is computed so as to obtain the demux signals. Another additional feature is that GDM implementation can be very attractive using fast algorithms. 
\\
Suppose that users' data are $p$-ary symbols transmitted at a speed $B_1:=1/T$ bauds. Let us consider the problem of multiplexing $N$ users. Traditionally the bandwidth requirements will increase proportionally with the number $N$ of channels, i.e., $B_N=N.B_1$ Hz. 

\begin{figure}
\centering{\includegraphics[width=0.85\columnwidth]{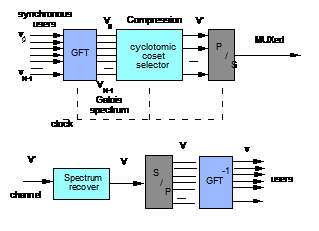}}
\caption{Multiplex based on Finite Field Transforms. Suppose that users' data are $p$-ary symbols transmitted at a speed $B-1:=1/T$ bauds. Let us consider the problem of multiplexing $N$ users. Traditionally the bandwidth requirements will increase proportionally with the number $N$ of channels, i.e., $B_N=N.B_1$ Hz.}
\label{fig:mux ffht}
\end{figure}

Heretofore the number of cyclotomic cosets using Galois-Fourier and Hartley-Fourier finite field spectrum is denoted $\nu_G$ and $\nu_H$, respectively. The clock driving GHT symbols is $N/\nu$ times faster than the input baud rate.\\

\begin{definition} The bandwidth compactness factor parameter $\gamma_{cc}$ is defined as $\gamma_{cc} :=N/\nu$. 
It plays a role somewhat similar than the coding asymptotic gain $\gamma_{cc}$ on coded modulation [UNG 82].            
\end{definition}

Transform-multiplex, i.e., mux based on finite field transforms are very attractive compared with FDM, TDM due their better spectral efficiency as it can be seem in Table II (appendix).
\\
Another point to mention is that instead compressing spectra (eliminating redundancy), all the coefficients can be used to introduce some error-correction ability. The valid spectrum sequences generate some sort of multilevel block code.
\\
\begin{lemma} For an $N$-user GDMA system over $GI(p^m)$ with $N~|~p^m-1$, only a number $\nu = \gamma_{cc}^{-1}N$  of finite-field transform coefficients required to be transmitted.
\end{lemma}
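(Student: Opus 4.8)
The plan is to exhibit the redundancy of the finite-field spectrum as a consequence of the Frobenius automorphism, and then to show that retaining one coefficient per cyclotomic coset is both sufficient and necessary. Since $N\mid p^m-1$ one has $p\nmid N$, hence $N$ is invertible in $GF(p)$ and the transform $V_k=\sum_{i=0}^{N-1}v_i\,\alpha^{ik}$, $k\in\mathbb{Z}_N$ (with $\alpha\in GF(p^m)$ of order $N$), is a bijection of $GF(p^m)^{N}$; its inverse recovers the user vector $(v_0,\dots,v_{N-1})$. I would take $\nu$ to be, by definition, the number of $p$-cyclotomic cosets modulo $N$, and prove that precisely this many spectral coordinates must be placed on the channel.

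First I would establish the conjugate-symmetry identity $V_k^{\,p}=V_{pk\bmod N}$. This follows by applying the Frobenius map $x\mapsto x^{p}$ to the transform sum: since every $v_i$ lies in the ground field $GF(p)$ it is fixed by Frobenius, so $V_k^{\,p}=\sum_i v_i\,\alpha^{ipk}=\sum_i v_i\,\alpha^{i(pk\bmod N)}=V_{pk\bmod N}$, where I also use $\alpha^{N}=1$. (This is the finite-field analogue of the Hermitian symmetry enjoyed by the DFT of a real sequence, and is the conjugacy constraint implicit in [BLA 79].) Iterating the identity along a coset $C=\{k_0,pk_0,p^{2}k_0,\dots\}\pmod N$ shows that all coefficients indexed by $C$ are obtained from the single coefficient $V_{k_0}$ of the coset leader by repeated application of the known, invertible map $x\mapsto x^{p}$. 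Hence the transmitter need only send the $\nu$ coefficients $\{V_{k_0}\}$, one per coset; the receiver reconstructs every missing $V_k$ as the appropriate power of a received leader, then applies the inverse finite-field transform to demultiplex the users.

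To upgrade ``sufficient'' to ``exact'', I would argue that the assignment $(v_0,\dots,v_{N-1})\mapsto(V_{k_0})_{\text{leaders}}$ is injective --- it is the bijective transform followed by the projection onto the leader coordinates, and the suppressed coordinates are functions of the retained ones --- and that its image is exactly the set of leader tuples compatible with the constraints, i.e. with $V_{k_0}$ ranging over the subfield $GF(p^{\,d_C})$, $d_C=|C|$ (iterating $V_k^{\,p}=V_{pk}$ around $C$ forces $V_{k_0}^{\,p^{d_C}}=V_{k_0}$, and conversely any such choice Frobenius-fixes the reconstructed $v_i$). Since $\sum_C d_C=N$, the set of admissible leader tuples and the input set $GF(p)^{N}$ have the same cardinality $p^{N}$, so the injection is a bijection and the representation is lossless with no smaller coefficient set possible. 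As Definition~2 sets $\gamma_{cc}:=N/\nu$, the stated equality $\nu=\gamma_{cc}^{-1}N$ is then immediate.

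The identical scheme applies to the Hartley-Galois transform, with the kernel $\alpha^{ik}$ replaced by $cas_i k$; the only extra work is that the evenness/oddness of the $cas$ function couples the index maps $k\mapsto pk$ and $k\mapsto -k$, so the relevant orbits --- and hence the count $\nu_H$ --- arise from the larger group these generate. I expect this orbit bookkeeping, in particular the separate treatment of the trivial coset $\{0\}$ and of self-conjugate cosets, to be the only delicate point; the algebraic core, the Frobenius identity, is routine.
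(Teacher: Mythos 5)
Your proof is correct, but it takes a genuinely different route from the paper's. The paper's own proof is purely a \emph{counting} argument: it takes for granted (citing [BLA 79] and [CAM et al. 98]) that only one coefficient per cyclotomic coset need be sent, and devotes itself to computing how many cosets there are --- $\nu_G=\sum_{d\mid m}I_d(p)-1$ via the M\"obius-inversion count of irreducible polynomials of degree dividing $m$, and then $\nu_H=\tfrac{\nu_G+(N\bmod 2)}{2}+1$ by clustering reciprocal-root cosets for the Hartley case. You instead prove the part the paper assumes: the Frobenius conjugacy constraint $V_k^{\,p}=V_{pk\bmod N}$, the reconstruction of a full coset from its leader, and --- going beyond the paper --- the exactness argument that the leader tuples range over $\prod_C GF(p^{d_C})$ with $\sum_C d_C=N$, so the representation is a bijection onto a set of cardinality $p^N$ and no smaller coefficient set can work. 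This is the more substantive half of the lemma (whose literal statement, given Definition~2, is otherwise circular), and your necessity argument is a genuine addition. What you do not supply is the paper's explicit evaluation of $\nu$ --- the irreducible-polynomial count and the Hartley halving formula --- which you correctly flag as ``orbit bookkeeping'' for the group generated by $k\mapsto pk$ and $k\mapsto -k$ but leave uncomputed. The two arguments are complementary: yours justifies that $\nu$ equals the number of cosets and that this is optimal; the paper's tells you what that number is.
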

\begin{proof}
 According to M\"obius' inversion formula, $I_k(q)=\frac{1}{k}\sum_{d|k}^{~} \mu(d)q^{k/d}$  gives the number of distinct primitive polynomials of degree $k$ over GF($q$),  where $\mu$ is the Moebius function. Therefore, the number $\nu_g$ of cyclotomic sets on the Galois-Fourier transform  $(V_0,V_1,...,V_{N-1})$ is given by

\begin{equation}
\nu_g=\sum_{m}^{~}I_m(p)-1.
\end{equation}

Since each pair of cosets containing reciprocal roots is clustered,
\begin{equation}
\nu_H=\frac{\nu_G+(N~mod~2)}{2}+1. 
\end{equation}

\end{proof}

As a role-of-thumb, the number of cosets (in fact the gain $\gamma_{cc}$) when $N=p^m-1$ is roughly given by $\nu_G  \approx \left \lceil \frac{N}{m} \right \rceil $ and $\nu_H \approx \left \lceil {\frac{1}{2}  \left \lceil {\frac{N}{m}} \right \rceil +1} \right \rceil $, where  $\left \lceil x \right \rceil $ is the ceiling function (the smallest integer greater than or equal to $x$).
\\
\\
A simple example is presented below over $GF(3) \rightarrow  GI(3^3)$: Factoring $x^{26}-1$ one obtains $\nu_G=10$ and $\nu_H= 6$. For Hartley FFHT,  $V_k^{3}=V_{26-k3}$ (indexes modulo 26), according to Lemma 1 [CAM et al. 98].\\

\begin{table}[h]
\centering
\begin{tabular}{|l|l|}
FFFT cosets	   & FFHT cosets \\
C0=(0)	   &C0=(0)\\
C1=(1,3,9)	   &C1=(1,23,9,25,3,17)\\
C2=(2,6,18)	   &C2=(2,6,18,8,24,20)\\
C4=(4,12,10)  &C4=(4,14,10,22,12,16)\\
C5=(5,15,19)   &C5=(5,11,19,21,15,7)\\
C7=(7,21,11)   &C13=(13).\\
C8=(8,24,20)   &\\
C13=(13)            &\\
C14=(14,16,22) &\\
C17=(17,25,23).&
\end{tabular}
\end{table}

Another interesting possibility is multiplexing without the cyclotomic coset compression. Although such a GDM present the same spectral efficiency as TDM or FDM it introduces some error-correcting ability yielding a better performance. By way of interpretation, Hartley transform can be viewed as a kind of 'Digital Single Side Band' since the number of cyclotomic cosets of FFHT is roughly a half of that one of FFFT. 
\\We can say that '\textit{GDM/FFFT is to FDM/AM as GDM/FFHT is to FDM/SSB}.'

\begin{proposition} Gain of GDM. The gain on the number of channels GDMed regarding to TDM/FDM on the same bandwidth is $N-\nu$, which corresponds to $g\%=100(1-\gamma_{cc}^{-1} )$
\end{proposition}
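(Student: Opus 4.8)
The plan is to recast the statement as a bandwidth (baud-rate) accounting, argued in exactly the same spirit as the coded-modulation comparison used throughout the paper: fix the occupied bandwidth and count how many ground-field $p$-ary channels each scheme can carry inside it. First I would isolate the two facts that do all the work. By the Lemma, an $N$-user GDMA system over $GI(p^m)$ with $N~|~p^m-1$ needs only $\nu=\gamma_{cc}^{-1}N$ of the $N$ Galois-transform coefficients to be put on the channel, the remaining $N-\nu$ being recovered exactly at the receiver from the transmitted cyclotomic coset leaders via the Frobenius relations on the spectrum before the inverse transform is applied; and by Definition~2, $\gamma_{cc}=N/\nu$.

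Second, I would adopt the signaling convention implicit in the paper: the bandwidth occupied by a digital stream is proportional to its signaling (baud) rate and does not depend on the size or dimensionality of the symbol alphabet --- this is precisely what lets the $N-\nu$ redundant coordinates be absorbed into the alphabet expansion from $GF(p)$ to the complex field $GI(p^m)$ at no bandwidth cost, just as redundancy is absorbed into a constellation expansion in coded modulation. Under this convention, $N$ users emitting $p$-ary symbols at $B_1=1/T$ bauds, when TDM'ed or FDM'ed, present an aggregate signaling rate $NB_1$ and hence occupy bandwidth proportional to $NB_1$; whereas the GDM of the same $N$ users emits only its $\nu$ coset-leader coefficients per input interval $T$, i.e.\ at signaling rate $\nu B_1$, occupying bandwidth proportional to $\nu B_1$.

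The conclusion is then immediate. In a band of width proportional to $\nu B_1$, a conventional TDM/FDM system fits $\nu$ of the $p$-ary channels while the GDM scheme fits $N$ of them, so the net increase in the number of simultaneously multiplexed channels is $N-\nu$. Normalising this increment by the GDM channel count $N$ and substituting $\gamma_{cc}=N/\nu$ gives
\[
g\% \;=\; 100\,\frac{N-\nu}{N} \;=\; 100\left(1-\frac{\nu}{N}\right) \;=\; 100\left(1-\gamma_{cc}^{-1}\right),
\]
which is the asserted figure. (Had one instead normalised by the TDM channel count $\nu$, the relative gain would read $100(\gamma_{cc}-1)\%$; the proposition reports it relative to $N$.)

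The step that actually needs care --- and the one I would expect a referee to probe --- is the second ingredient, namely the claim that trading the $p$-ary ground alphabet for the extension alphabet $GI(p^m)$ is free in bandwidth. At the level of rigor of this paper it is the usual Nyquist/coded-modulation heuristic (occupied bandwidth is set by pulse shape and baud rate, the enlarged constellation paying only in energy or SNR), consistent with the fact that the multiplicities ``$N$'' and ``$\nu$'' are themselves described here only up to ceiling estimates. A fully rigorous version would have to commit to a pulse, a detector and a noise model and then compare error performance of the $GI(p^m)$-valued stream at rate $\nu B_1$ with that of the $N$ separate $p$-ary streams; but that is outside the combinatorial content of the proposition, whose arithmetic core --- $\nu=\gamma_{cc}^{-1}N$, hence a gain of $N-\nu$ channels and $g\%=100(1-\gamma_{cc}^{-1})$ --- follows at once from the Lemma and Definition~2.
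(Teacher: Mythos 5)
Your argument is correct and is essentially the same as the paper's: both reduce to the accounting $B_{GDM}=\nu B_1=\gamma_{cc}^{-1}NB_1$ versus $B_{TDM}=NB_1$, from which the channel gain $N-\nu$ and $g\%=100(1-\gamma_{cc}^{-1})$ follow immediately. The paper phrases it as the saved bandwidth $B_{TDM}-B_{GDM}$ measured in units of $B_1$, while you fix the band and count channels, but this is the same arithmetic; your explicit flagging of the implicit assumption that the alphabet expansion to $GI(p^m)$ costs no bandwidth is a point the paper leaves unstated.
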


\begin{proof} 
The bandwidth gain is $g_{band}=B_{TDM}/B_{GDM} =\gamma_{cc}$ and the saved Bandwidth is given by:
$B_{TDM} - B_{GDM}$. Calculating how many additional $B_1$-channels (users) can be introduced:
\begin{equation}
\frac {(B_{TDM} - B_{GDM})}{B_1}=(1- \frac {1}{\gamma_{cc}})N.
\end{equation}  
\end{proof}
Indeed, a more formal treatment of spectra should be tried. Power Spectral density calculations should be evaluated by using cyclic Autocorrelation Functions (ACF) of the carriers. As usual, it is assumed that users' data sequences are independent.

\begin{lemma} Users' signal sequences can be viewed as wide-sense stationary random processes in both time and Galois-domain, by assuming an uniform probability distribution on GF($p$) symbols.
\end{lemma}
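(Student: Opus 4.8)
The plan is to read ``wide-sense stationary'' in the cyclic sense appropriate to length-$N$ periodic sequences: a sequence $\{x_n\}$ is WSS when its mean $E[x_n]$ is independent of $n$ and its cyclic autocorrelation $R_x[n,\tau]:=E[x_n x_{n+\tau}^{*}]$ (indices modulo $N$) is a function of the lag $\tau$ only. Before anything else one must fix what ``expectation'' means for symbols living in $\mathrm{GF}(p)$ and, later, in $\mathrm{GI}(p^m)$: I would either embed $\mathrm{GF}(p)=\{0,1,\dots,p-1\}$ in $\mathbb{Z}\subset\mathbb{R}$ through natural representatives and use ordinary second-order statistics, or else define $E[\cdot]$ as the normalized sum over the sample space taking values in the field itself. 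Either way the only probabilistic inputs are the uniform law on $\mathrm{GF}(p)$ and the standing assumption that distinct users' symbols are independent.

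\textit{Time domain first.} Because $v_0,\dots,v_{N-1}$ are independent and identically (uniformly) distributed, their joint law is invariant under every permutation of the indices, in particular under cyclic shift, so $\{v_n\}$ is strictly stationary and hence WSS. Explicitly, $E[v_n]=\mu$ with $\mu=\frac1p\sum_{a=0}^{p-1}a$ does not depend on $n$, and by independence $R_v[n,\tau]$ equals $E[|v_0|^2]$ when $\tau\equiv 0\pmod N$ and $|\mu|^2$ otherwise, in both cases a function of $\tau$ alone. Setting $\sigma^2:=E[|v_0|^2]-|\mu|^2$ for the per-symbol ``energy'', the centered data form a white process of variance $\sigma^2$; in the field-valued reading $\mu=0$ for odd $p$ since $\sum_{a\in\mathrm{GF}(p)}a=0$, so centering is not even required.

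\textit{Galois domain next.} The spectrum $(V_0,\dots,V_{N-1})$ is the image of $(v_0,\dots,v_{N-1})$ under the finite-field Fourier (resp.\ Hartley) transform, $V_k=\sum_{n=0}^{N-1}v_n\,K(n,k)$ with kernel $K(n,k)=\alpha^{nk}$ (resp.\ $\mathrm{cas}_k n$). Passing $E[\cdot]$ through the sum, $E[V_k]=\mu\sum_n K(n,k)$, which by the geometric (character) sum is $0$ for $k\not\equiv 0\pmod N$; so after removing the DC symbol --- equivalently, in the field-valued reading where $\mu=0$ --- the mean is identically zero and a fortiori index-free. For the correlation, independence and equal variance of the $v_n$ give $\mathrm{Cov}(V_k,V_{k+\tau})=\sigma^2\sum_n K(n,k)\,\overline{K(n,k+\tau)}$, and the column-orthogonality of the matrix $[K(n,k)]$ --- already recorded for the $\{\mathrm{cas}_i k\}$ matrix --- collapses this to $N\sigma^2$ when $\tau\equiv 0\pmod N$ and to $0$ otherwise, a function of $\tau$ alone. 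Hence $\{V_k\}$ is WSS in the Galois domain; indeed the orthogonal transform carries the white input to an (essentially) white output, which is exactly what the later power-spectral-density computation relies on.

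The genuine obstacle here is conceptual rather than computational: one has to commit to a meaning of expectation, variance and ``wide-sense stationarity'' for field-valued variables rather than real- or complex-valued ones. The integer-representative reading keeps contact with the classical PSD machinery and reduces the time-domain half to the textbook fact that an i.i.d.\ sequence is stationary, while the field-valued reading has the tidiness that every first moment --- including all the $E[V_k]$ --- vanishes outright for odd $p$. Apart from that choice, the only fact to be verified is the orthogonality of the transform kernel, which is precisely the property invoked earlier for the carrier matrix, so no new machinery is needed.
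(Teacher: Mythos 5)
Your proposal is correct and follows essentially the same route as the paper: an i.i.d.\ (hence white) zero-mean input, linearity of expectation for the spectral mean, and orthogonality of the $\mathrm{cas}$ kernel to collapse the spectral autocorrelation to a delta at zero lag. The only cosmetic difference is that the paper sidesteps your centering/DC discussion by taking the balanced representatives $\left\{0,\pm1,\dots,\pm\frac{p-1}{2}\right\}$ so that $E\{v_i\}=0$ outright (and it states $R_V(0)=R_v(0)$ where your unnormalized-kernel accounting more carefully yields the factor $N$).
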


\begin{proof}
time domain data stream:
\begin{eqnarray}
&...\left ( v_0^{(-1)},v_1^{(-1)},...,v_{N-1}^{(-1)} \right )~\left( v_0^{(0)},v_1^{(0)},...,v_{N-1}^{(0)} \right )~ \\ \nonumber
&\left( v_0^{(1)},v_1^{(1)},...,v_{N-1}^{(1)} \right )~\left( v_0^{(2)},v_1^{(2)},...,v_{N-1}^{(2)} \right ) ...
\end{eqnarray}
Supposing $v_i \in \left \{ 0,\pm 1,\pm 2,...,\pm \frac{(p-1)}{2} ) \right \}$ equally likely, the mean and the ACF of the discrete process are given respectively by ($E$ denotes the expected value):\\
$E \left \{ v_i^{(m)} \right \} =E \left \{ v_i \right \} =0 ~~(\forall i)$  and
\begin{equation}
R_{v} (j)=E \left \{ v_i^{(m)}.v_{i-j}^{(m)^*} \right \}=E \left \{ v_i.v_{i-j}^* \right \}=0 ~i\neq j,
\end{equation}
\begin{equation}
R_{v} (0)=0^2+1^2+2^2+3^2+...+(\frac{p-1}{2})^2:=\textbf{P}.
\end{equation}

Galois domain data stream: 
\begin{eqnarray}
&...\left ( V_0^{(-1)},V_1^{(-1)},...,V_{N-1}^{(-1)} \right )~\left( V_0^{(0)},V_1^{(0)},...,V_{N-1}^{(0)} \right )~ \\ \nonumber
&\left( V_0^{(1)},V_1^{(1)},...,V_{N-1}^{(1)} \right )~\left( V_0^{(2)},V_1^{(2)},...,V_{N-1}^{(2)} \right )...
\end{eqnarray}
From $V_k=\sum_{i=0}^{N-1} v_i cas_k i~(\forall k)$, it follows that:
\begin{equation}
E \left \{ V_k^{(m)} \right \} =E \left \{ V_k \right \}=\sum_{i=0}^{N-1}E(v_i).cas_k i=0 ~(\forall k).
\end{equation}
The ACF of the random process (Galois spectrum sequence) is:\\
$R_V(j)=E\left ( V_k V_{k-j}^{*} \right )$ that is\\
$R_V(j)=\sum_{i=0}^{N-1}\sum_{\Delta i=0}^{N-1}E\left ( V_i V_{i-\Delta i}^{*} \right )cas_k i.cas_{k-j}^{*}(i-\Delta i)$.\\
$R_V(j)=\sum_{i=0}^{N-1}\sum_{\Delta i=0}^{N-1} R_V (\Delta i) cas_k(i).cas_{k-j}^{*}(i-\Delta i)$.\\

Therefore, 
\begin{equation}
R_V(j)=R_V(0)\sum_{i=0}^{N-1}cas_k i.cas_{k-j}^{*}i.
\end{equation}
From orthogonality properties of $cas_k$-function [CAM et al. 98], it follows that
\begin{equation}
R_V(j)=0~~j\neq 0~and~R_V(0)=R_v(0).
\end{equation}     
\end{proof}
This result can be used to show that the multiplex based upon finite field Hartley transform does not shape the signal power spectrum. Denote by $S_b(t)$ the  complex  envelope of the generalized QAM (G-QAM) signal $s(t)$, i.e.,
$s(t)=\Re e \left \{ s_b(t)exp(j2\pi f_c t) \right \}$ where
\begin{eqnarray}
s_b(t)=\sum_{m=-\infty }^{+\infty }\sum_{k=0}^{N-1} V_k^{(m)}u_{k,m}(t).\\ \nonumber
u_{k,m}:=u(t-kT-mNT).
\end{eqnarray}
The corresponding 2-dimensional constellation is drawn in Figure 5 (see appendix).

\begin{theorem}
The Power Spectral Density of GDM signals is given by $|U(f)|^2$ which depends just on the shape of the spectrum of the $u(t)$.
\end{theorem}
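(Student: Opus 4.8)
\section*{Proof sketch for Theorem 1}

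The plan is to recognize $s_b(t)$ as an ordinary linearly–modulated (PAM–type) waveform driven by a single \emph{white} symbol stream, and then to invoke the standard formula for the power spectral density of such a waveform. First I would collapse the double index $(k,m)$ into a single one: since $kT+mNT=(mN+k)T$ and $k$ ranges over $\{0,1,\dots ,N-1\}$, the map $n=mN+k$ is a bijection of $\mathbb{Z}$, so that
\begin{equation}
s_b(t)=\sum_{n=-\infty}^{+\infty}A_n\,u(t-nT),\qquad A_n:=V_{\,n\bmod N}^{\,(\lfloor n/N\rfloor)} .
\end{equation}
In this form the multiplexing carriers $cas_k i$ enter only through the numerical values $A_n$ and never through the waveform $u(t)$, which already isolates where the spectral shaping can possibly come from.

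Next I would pin down the second–order statistics of $\{A_n\}$. By the preceding lemma the Galois–domain sequence has zero mean and, inside one transform block, $R_V(j)=R_v(0)\,\delta_{j,0}=\textbf{P}\,\delta_{j,0}$; combined with the standing hypothesis that the users' data (hence successive length-$N$ blocks) are mutually independent, this yields $E\{A_n\}=0$ and $E\{A_nA_{n-\ell}^{*}\}=\textbf{P}\,\delta_{\ell,0}$. Thus $\{A_n\}$ is a zero–mean, wide–sense–stationary \emph{white} sequence with flat discrete spectrum $S_A(e^{j2\pi fT})=\textbf{P}$.

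Then I would compute the PSD. The process $s_b(t)$ is cyclostationary with period $T$, so I would randomize the time origin (equivalently, average the autocorrelation over one period) and apply the classical linear–modulation result
\begin{equation}
S_{s_b}(f)=\frac{|U(f)|^{2}}{T}\sum_{\ell=-\infty}^{+\infty}R_A(\ell)\,e^{-j2\pi f\ell T}=\frac{\textbf{P}}{T}\,|U(f)|^{2},
\end{equation}
and finally translate to passband, $S_s(f)=\tfrac14\big[S_{s_b}(f-f_c)+S_{s_b}(-f-f_c)\big]$. Since the symbol–domain factor reduces to the constant $\textbf{P}$, the shape of $S_{s_b}$—and hence of $S_s$—is carried entirely by $|U(f)|^{2}$: the Galois–Hartley transform performs the multiplex without shaping the spectrum.

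The step that needs the most care is the passage from the block–wise orthogonality supplied by the lemma to genuine wide–sense stationarity and whiteness of the flattened stream $\{A_n\}$: the lemma as stated controls only the within–block correlations $E\{V_kV_{k-j}^{*}\}$, whereas the PSD also needs the cross–block terms $E\{V_k^{(m)}V_{k'}^{(m')*}\}$ for $m\neq m'$, which must be discharged by invoking independence of the users' data across time. Once that is secured, the remaining work—the Fourier transform of the time–averaged autocorrelation and the passband translation—is routine.
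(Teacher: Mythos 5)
Your proposal is correct and follows essentially the same route as the paper: establish that the Galois-domain symbol stream is zero-mean and white (Lemma~2 within a block, independence of users' data across blocks), stationarize the cyclostationary envelope by randomizing the time origin, and apply the Wiener--Khinchine relation, so that the PSD reduces to a constant times $|U(f)|^2$. The only differences are cosmetic --- you collapse the $(k,m)$ indices into a single PAM stream and quote the standard linear-modulation PSD formula instead of carrying the paper's explicit quadruple sum with a phase uniform over $(0,NT)$ --- and your normalization $\textbf{P}/T$ versus the paper's $\textbf{P}/(NT)$ is immaterial to the claimed spectral shape.
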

\begin{proof}
 It can be shown that $s_b(t)$ is a cyclostationary process [GAD-FRA 75] but we treat it as a wide-sense stationary one by introducing a random phase uniformly distributed over one block. Therefore we consider a related signal
\begin{equation}
 \widetilde{s}_b(t)=\sum_{m=-\infty }^{+\infty }\sum_{k=0}^{N-1}V_k^{(m)}u_{k,m}(t;\theta )
\end{equation}
where $u_{k,m}(t;\theta)=u_{k,m}(t-\theta)$,  $\theta$ being uniformly distributed in the interval $(0,NT)$. The autocorrelation function (ACF)  of the complex envelope is
\begin{eqnarray}
R_{\widetilde{s}_b}(t,t-\tau )= ~~~~~~~~~~~~~~~~~~~~~~~~~~~~~~~~~~~~~~~~~~~~~~~& \\ \nonumber
~~~=\sum_{m=-\infty }^{+\infty }\sum_{\Delta m=-\infty }^{+\infty }\sum_{k=0}^{N-1}\sum_{\Delta k=0}^{N-1}E~V_k^{(m)}V_{k-\Delta k}^{(m-\Delta m)^*}&\\ \nonumber
\frac{1}{NT}\int_{0}^{NT}u_{k,m}(t,\theta )u_{k-\Delta k,m-\Delta m}^{*}(t-\tau ;\theta )d\theta&
\end{eqnarray}

where indexes $k-\Delta k$ are taken mod $N$.
\\
\\
We remark first that spectra (blocks) are transmitted independently so that complex symbols on different $N$-vectors are uncorrelated, yielding \\
\begin{equation}
\sum_{\Delta k=0}^{N-1}E~V_k^{(m)}V_{k-\Delta k}^{(m-\Delta m)*}=0~~\Delta m\neq 0.
\end{equation}
Furthermore, we suppose the sequence of $N$-dimensional signals is wide-sense stationary (Lemma 2) and that there exists a block ACF $R_V(j)$, $j=0,1,2,...N-1$. The ACF of the complex envelope is therefore
\\
\begin{eqnarray}
& R_{\widetilde{s}_b}(t,t-\tau )= \frac{1}{NT} \sum_{k=0}^{N-1} \sum_{j=0,j\neq k}^{N-1}R_V(j)\sum_{m=-\infty }^{+\infty }\\ \nonumber
&\int_{0}^{NT}u_{k,m}(t;\theta )u_{k-j,m}^{*}(t-\tau ;\theta )d\theta
\end{eqnarray}

By an appropriate change of variables, we obtain\\
$R_{\widetilde{s}_b}(t,t-\tau )=$\\
$=\frac{1}{NT}\sum_{k=0}^{N-1}\sum_{j=0,j\neq k}^{N-1}R_V(j)\int_{0}^{NT}u(\alpha +\tau -kT)u^{*}(\alpha)d\alpha $
Putting the above equation in a simpler notation results in:
\begin{equation}
R_{\widetilde{s}_b}(\tau )=R_{\widetilde{s}_b}(t,t-\tau )=\frac{1}{NT}\sum_{j=0}^{N-1}R_V(j)\vartheta (\tau -jT),
\end{equation}\\
where $\vartheta (\tau -jT)=\int_{-\infty }^{+\infty }u(\alpha +\tau -jT)u^{*}(\alpha)d\alpha$. \\
Indeed, $R_{\widetilde{s}_b}(\tau )=\frac{1}{NT}R_V(0)\vartheta (\tau)$. 
Taking now the Fourier transform of the ACF, we have finally by the Wiener-Kinchine relation
$S_{\widetilde{s}_b}(f )=\frac{\textbf{P}}{NT}\left | U(f) \right |^2$, so the power spectrum of the multiplexed signal follows directly from the modulation theorem.     
\end{proof}

What can we say about the alphabet extension? The greatest extension that can be used depends on the signal-to-noise ratio, since the total rate cannot exceed Shannon Capacity over the Gaussian channel.
$B_{GDM} \gamma_{cc} . log_{2} p \leq B_{GDM}.log_{2} \left ( 1+ \frac{S}{N} \right )$ bps,or
\begin{equation} 
\gamma_{cc} \leq log_{p}\left ( 1+\frac{S}{N} \right ).
\end{equation}
\section{conclusions}
\label{sec:conclusion}

New digital multiplex schemes based on finite field transforms have been introduced which are multilevel Code Division Multiplex. They are attractive owing to their better spectral efficiency regarding classical TDM/CDM which require a bandwidth expansion roughly proportional to the number of channels to be multiplexed. Moreover, the Galois-Field Division (GDM) implementation can be easily carried out by a Digital Signal Processor (DSP). Another nice payoff of GDM is that when Hartley Finite Field transform is used, the mux and demux hardware are exactly the same. It is proved that GDM based on Finite Field Hartley Transform does not shape the signal Power Spectrum. They can directly be applied in multiple access digital schemes. 

\section{Acknowledgments}
\label{sec:ack}
This first author express his deep indebtedness to Professor G\'erard Battail (ENST Paris) whose philosophy had a decisive influence on the way of looking coding and multiplex.\\

\textbf{appendix}\\

\begin{table}[b]
\caption{A spectral efficiency comparison for multiplex Systems.} \label{tab2}
\begin{tabular}{ c c c c }
	 &one-user & $N$-users TDMed & GDMed \\
\hline
Tx rate& $R_{i-user}=\frac {log_2 p}{T}$ & $\sum_{i}{} R_{i-user}$           & $ \sum_{i}{} R_{i-user}$ \\ 
                  & $=N \frac {log_2 p}{T}~bps$      & $= N \frac {log_2 p}{T}~bps$& \\
~& ~& ~&~\\
Bandwidth requirements & $B_1=\frac {1}{T}~Hz$ & $B_N= \frac {1}{T/ N}=NB_1~Hz$ & $B_{GDM}=\frac{1}{\gamma_{cc}}.(NB_1)~Hz$\\
~&~ & ~&~\\
Spectral efficiency & $\eta_{i-user}=log_{2} p$ bits/s/Hz & $\eta_{mux}=log_{2} p$ bits/s/Hz & $\eta_{GDM}= \gamma_{cc} log_{2} p$ bits/s/Hz\\
\hline
\end{tabular}
\end{table}

\newpage
\begin{minipage}[b]{1\linewidth}
\begin{IEEEbiography}[{\includegraphics[width=1in,height=1.25in,clip,keepaspectratio]{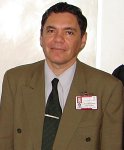}}]
{H\'elio Magalh\~{a}es de Oliveira} was born in Arcoverde,
Pernambuco, Brazil, in 1959. He received both the B.Sc. and M.S.E.E.
degrees in electrical engineering from Universidade Federal de
Pernambuco (UFPE), Recife, Pernambuco, in 1989 and 1983,
respectively. Then he joined the staff of the same university as a lecturer. 
In 1992 he earned the Docteur de l\emph{\'Ecole Nationale Sup\'erieure des
T\'el\'ecommunications} degree, in Paris. He is currently with the Statistics Departement of UFPE.
Dr. de Oliveira was appointed as honored professor by fivety electrical engineering
undergratuate classes and chosen as the godfather of sixteen
engineering graduation. His publications are available at arXiv. 
Research current interests include: applied statistics, communications theory,
applied information theory, signal analysis, and wavelets.
\end{IEEEbiography}
\end{minipage}

\begin{figure}[t]
{\includegraphics[width=0.35\textwidth]{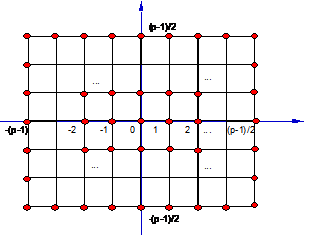}}
\caption{Two-dimensional constituent constellation.}
\label{fig:constellation}
\end{figure}


\begin{thebibliography}{99}

\bibitem{HAR 42} [HAR 42] R.V.L. Hartley, A more symmetrical Fourier analysis applied to transmission problems, \textit{proc. IRE},vol 30,pp.144-150,1942.

\bibitem{FOR et al. 84}[FOR et al. 84] G.D. Forney Jr, R.G. Gallager, G.R. Lang, F.M. Longstaff and S.U. Qureshi, Efficient modulation for Band-limited channels, \textit{IEEE J. Select. Areas Commun.},SAC 2,Sept.,1984,pp.632-646.

\bibitem{UNG 82}[UNG 82] G. Ungerboeck, Channel Coding with multilevel/phase signals, \textit{IEEE Trans. Info. Theory}, IT 28,Jan.,1982,pp.55-67.

\bibitem{CAM et al. 98}[CAM et al. 98] R.M. Campello de Souza, H.M. de Oliveira, A.N. Kauffman and A.J.A. Paschoal, Trigonometry in Finite Fields and a new Hartley Transform, \textit{IEEE International Symposium on Information Theory}, ISIT, MIT Cambridge,MA, THB4: Finite Fields and Appl.,p.293,1998.(see  http:/www.isit98.com)

\bibitem{BRI 92}[BRI 92] J. Brittain, Scanning the past: Ralph V.L. Hartley, \textit{Proc. IEEE}, vol.80,p.463,1992.

\bibitem{MAS-MIT 91}[MAS-MIT 91] J.L. Massey and T. Mittelholzer, Welch's bound and sequence sets for Code-Division-Multiple-Access Systems, \textit{Proc. of Sequences'91}, Springer-Verlag, 1991.

\bibitem{REE et al. 77}[REE et al. 77] I.S. Reed, T.K. Truong, V.S. Kwoh and E.L. Hall, Image Processing by Transforms over a Finite Field, \textit{IEEE Trans. Comput.}, 26,pp.874-881,Sep.,1977.

\bibitem{POL 71}[POL 71] J.M. Pollard, The Fast Fourier Transform in a Finite Field, \textit{Math. Comput.}, 25, pp.365-374, Apr., 1971.

\bibitem{POL 76}[POL 76] J.M. Pollard, Implementation of Number Theoretic Transforms, \textit{Ellect. Lett.}, 12, pp.378-379, July, 1976.

\bibitem{REE-TRU 79}[REE-TRU 79] I.S. Reed and T.K. Truong, Use of Finite Field to Compute Convolution, \textit{IEEE Trans. Info. Theory}, pp.208-213, Mar., 1979.

\bibitem{HON-VET 93}[HON-VET 93] J.J. Hong and M. Vetterli, Hartley Transforms over Finite Fields, \textit{IEEE Trans. Info. Theory}, 39, n.5, pp.1628-1638,Sept.,1993. Also Computing m DFT's over GF($q$) with one DFT over GF($q^m$), \textit{IEEE Trans. Info. Theory}, 29, n.1,pp.271-274, Jan.,1993.

\bibitem{BLA 79}[BLA 79] R.E. Blahut, Transform Techniques for Error Control Codes, \textit{IBM J. Res. Develop}, 23,n.3, pp.299-314, May,1979.

\bibitem{CAM-FAR 85}[CAM-FAR 85] R.M. Campello de Souza and P.G. Farrel, Finite Field Transforms and Symmetry Groups, \textit{Discrete Mathematics}, 56, pp.111-116, Elsevier pub., 1985.

\bibitem{HOU 87}[HOU 87] H.S. Hou, The fast Hartley transform algorithm, {IEEE Trans. Comput.},vol.36, pp.147-156,1987.

\bibitem{MAS 95}[MAS 95] J.L. Massey, Towards an Information Theory of Spread-Spectrum Systems, in: \textit{Code Division Multiple Access Communications}, Eds S.G. Glisic and P.A. Leppänen, Boston, Dordrecht and London, Kluwer, 1995,pp.29-46.

\bibitem{GRA-FRA 75}[GAD-FRA 75] W.A. Gadner and L.E. Franks, Characterization of cyclostationary random process, \textit{IEEE Trans. Info. Theory}, 21,Jan.,1975,pp.4-15.

\end{thebibliography}
\end{document}